\documentclass[11pt]{article}


\usepackage{cite}
\usepackage{graphicx}
\usepackage{amsmath}
\usepackage{amsthm}

\usepackage{amsfonts}
\usepackage{amssymb}
\usepackage{fullpage}
\usepackage{geometry}
\usepackage{color}
\usepackage{latexsym}
\usepackage{algorithm}
\usepackage[noend]{algorithmic}
\usepackage{multirow}

\newtheorem{theorem}{Theorem}[section]

\newtheorem{lemma}[theorem]{Lemma}

\theoremstyle{definition}


\renewcommand{\O}{{{\cal{O}}}}
\renewcommand{\S}{{{\cal{S}}}}


\title{An Approximation Algorithm for the Euclidean Bottleneck Steiner Tree Problem}
\date{}
\author{ A. Karim Abu-Affash\thanks{Department of Computer Science, Ben-Gurion University, Beer-Sheva 84105, Israel, {\tt abuaffas@cs.bgu.ac.il}.}}

\begin{document}
\maketitle

\begin{abstract}
Given two sets of points in the plane, $P$ of $n$ terminals and $S$ of $m$ Steiner points, a Steiner tree of $P$ is a tree spanning all points of $P$ and some (or none or all) points of $S$. A Steiner tree with length of longest edge minimized is called a bottleneck Steiner tree. In this paper, we study the Euclidean bottleneck Steiner tree problem: given two sets, $P$ and $S$, and a positive integer $k \le m$, find a bottleneck Steiner tree of $P$ with at most $k$ Steiner points. The problem has application in the design of  wireless communication networks. 

We first show that the problem is NP-hard and cannot be approximated within factor $\sqrt{2}$, unless $P=NP$. Then, we present a polynomial-time approximation algorithm with performance ratio 2.
\end{abstract}
\section{Introduction}
Consider a wireless communication network with $n$ stations, each station has a limited power so that it can only communicate with stations within a limited range, and suppose that, in order to make the network connected and due to budget limits, we are only allowed to put at most $k$ new stations in given potential locations in this network. Clearly, we would like to select locations such that distance between stations as small as possible. This application motivates the following problem: 

{\bf The Bottleneck Steiner Tree ($k$-BST) problem.} Given two sets in the plane, $P$ of terminal points and $S$ of Steiner points, and a positive integer $k$, one is asked to find Steiner tree $T$ of $P$ with at most $k$ Steiner points, such that the bottleneck (i.e., length of the longest edge) of $T$ is minimized. 

In the classical {\em Steiner tree} (ST) problem, the goal is to find a Steiner tree $T$ such the total length of edges of $T$ is minimized. This problem has been shown to be NP-complete~\cite{Garey77} and many approximation algorithms have been proposed~\cite{Berman94, Borchers97, Hougardy99}. A general version of the $k$-BST, where $k=|S|$, has been studied by Sarrafzadeh and Wong~\cite{Sarrafzadeh92}. They showed that this version can be solved in polynomial time. 

Another version, where $S$ is the whole plane $\mathbb{R}^2$, has been studied extensively in the last decade. In~\cite{Wang02}, this version was shown to be NP-hard to approximate within ratio $\sqrt{2}$. The best known upper bound on approximation ratio is $1.866$~\cite{Wang02+}. Bae et al.~\cite{Bae10} presented an $\O(n\log{n})$ time algorithm to the problem for $k=1$ and an $\O(n^2)$ time algorithm for $k=2$. Li et al.~\cite{Li04} presented a ($\sqrt{2}+\epsilon$)-approximation algorithm with inapproximability within $\sqrt{2}$ for a special case of the problem where there should be no edge connecting any two Steiner points in the optimal solution. These versions have many important applications in VLSI design, network communication and computational biology~\cite{Cheng01, Du00, Hwang92, Kahng95}.

We are not aware of any previous work studying our version. However, in this paper, we show that the $k$-BST problem is NP-hard and we present a polynomial-time algorithm with constant factor approximation ratio for the problem.



\section{Hardness Result}\label{sec:Sec2}
Given a set $P$ of $n$ terminals in the plane, a set $S$ of $m$ Steiner points and an integer $k \le m$, the goal in the $k$-BST problem is to find a Steiner tree with at most $k$ Steiner points from $\S$ and bottleneck as small as possible. In this section we prove hardness of the problem.

\begin{theorem}\label{thm:NP-thm}
The $k$-BST problem cannot be approximated within $\sqrt{2}$ in polynomial time, unless $P=NP$.
\end{theorem}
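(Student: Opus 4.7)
The plan is to derive the hardness via a gap-preserving reduction from the continuous variant of the bottleneck Steiner tree problem, in which Steiner points may be placed anywhere in $\mathbb{R}^2$. That variant is already known to admit no polynomial-time $\sqrt{2}$-approximation unless $P=NP$~\cite{Wang02}, and the goal is to transfer this gap to our discrete setting where Steiner points must come from a prescribed set $S$.

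First, I would revisit the construction in~\cite{Wang02} and extract two structural facts about its hard instances. In every YES instance, the optimal tree of bottleneck $d$ can be realized using Steiner points that lie in a polynomially bounded family of canonical positions, typically lattice vertices or intersections of circles of radius $d$ centered at terminals. In every NO instance, every Steiner tree with at most $k$ Steiner points has bottleneck at least $\sqrt{2}\,d$, regardless of where the Steiner points are placed in the plane.

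From such an instance I would then build an instance of $k$-BST by keeping $P$ and $k$ unchanged and taking $S$ to be exactly the polynomial-size set of canonical candidate positions identified above. Since $S \subset \mathbb{R}^2$, every feasible discrete solution is also feasible for the continuous problem, so the NO-case bottleneck remains at least $\sqrt{2}\,d$. In the YES case, the existence of a feasible tree whose Steiner points lie in $S$ is guaranteed by construction, so the bottleneck is at most $d$. A polynomial-time algorithm for $k$-BST with approximation ratio strictly smaller than $\sqrt{2}$ would therefore distinguish YES from NO instances of~\cite{Wang02} in polynomial time, contradicting $P \ne NP$.

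The main obstacle is the canonical-position step: showing that in the YES case the optimal Steiner points of the continuous problem can be restricted to a polynomial-size subset of $\mathbb{R}^2$ without increasing the bottleneck. For the gadgets of~\cite{Wang02} this should follow from a local-adjustment argument that rigidly snaps each unconstrained Steiner point to the unique lattice vertex within distance $d$ of all of its tree neighbors. If such a snapping argument is not directly available, a fallback is a self-contained reduction from an NP-hard grid problem, for instance Hamiltonian path in grid graphs, where the $\sqrt{2}$ factor arises naturally as the gap between axis-parallel unit edges and diagonal edges; the role of $S$ would then be to offer exactly the axis-parallel connections permitted by the grid.
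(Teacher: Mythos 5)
Your proposal follows essentially the same route as the paper, which proves this theorem in one line by invoking ``a slight modification of the proof of Theorem~1 in~\cite{Wang02}'': the modification is precisely what you describe, namely taking $S$ to be the (polynomial-size) set of canonical Steiner positions used by the YES-case trees of the Wang--Du gadgets, so that NO instances stay hard because any discrete solution is also a continuous one. You have in fact spelled out, and correctly flagged the one nontrivial step of, the argument the paper leaves implicit.
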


The proof directly follows by a slight modification of the proof of Theorem~1 in~\cite{Wang02}. 


\section{2-Approximation Algorithm}\label{sec:Sec3}

In this section, we develop a polynomial-time approximation algorithm for computing a Steiner tree with at most $k$ Steiner points ($k$-ST for short) such that its bottleneck is at most 2 times the bottleneck of an optimal (minimum-bottleneck) $k$-ST.

Let $G=(V,E)$ be the complete graph over $V=P\cup S$. We assume, without loss of generality, that $E=\{e_1,e_2,\ldots,e_l\}$ such that $|e_1|\le|e_2|\le\ldots\le|e_l|$. It is not hard to see that the bottleneck of an optimal $k$-ST is a length of an edge from $E$. For an edge $e_i\in E$, let $G_i=(V,E_i)$ be the graph with $E_i=\{e_j\in E:|e_j|\le|e_i|\}$. The idea behind our algorithm is to devise a procedure that, for a given edge $e_i\in E$, does one of the following:
\begin{enumerate}
\item[(i)]{It constructs a $k$-ST of $P$ in $G$ with bottleneck at most 2 times $|e_i|$.}
\item[(ii)]{It returns the information that $G_i$ does not contain any $k$-ST of $P$.}
\end{enumerate}

For two points $p,q\in P$, let $\delta_i(p,q)$ be a shortest Steiner path between $p$ and $q$ in $G_i$, i.e., a path connecting $p$ and $q$ with minimum number of Steiner points in $G_i$. Let $G_P=(P,E_P)$ be the complete graph over $P$. For each edge $(p,q)$ in $E_P$, we assign a weight $w(p,q)$ equal to the number of Steiner points in $\delta_i(p,q)$. Let $T$ be a minimum spanning tree of $G_P$ under $w$. We define the normalized weight of $T$ as $C(T)=\sum_{e\in T}\left\lfloor w(e)/2 \right\rfloor$.  

\begin{lemma} \label{lemma:lemma3.3}
If $G_i$ contains a $k$-ST of $P$, then $C(T) \le k$.
\end{lemma}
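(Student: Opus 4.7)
The plan is to exhibit a spanning tree $T''$ of $G_P$ derived from a hypothesized $k$-ST of $P$ in $G_i$, whose normalized weight satisfies $C(T'') \le k^*$ (where $k^* \le k$ is the number of Steiner points in the $k$-ST), and then to transfer this bound to $T$ via MST extremality. Assume $T^*$ is such a $k$-ST and, without loss of generality, that every Steiner point of $T^*$ has degree at least $2$ (prune degree-$1$ Steiners, which only frees the bottleneck). Root $T^*$ at an arbitrary terminal $r \in P$, and for each non-root terminal $p$ let $\pi(p)$ denote its nearest terminal ancestor in the rooted $T^*$. Let $\ell_p$ be the number of Steiner points on the $T^*$-path from $p$ to $\pi(p)$. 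Then $T'' = \{(p, \pi(p)) : p \in P \setminus \{r\}\}$ is a spanning tree of $G_P$ with $w(p, \pi(p)) \le \ell_p$, since the $T^*$-path from $p$ to $\pi(p)$ is a valid Steiner path in $G_i$.

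The technical heart is to show $\sum_{p \ne r} \lfloor \ell_p / 2 \rfloor \le k^*$. The idea is, for each edge $(p, \pi(p))$ of $T''$, to associate the $\lfloor \ell_p / 2 \rfloor$ disjoint consecutive pairs of Steiner points along the $T^*$-path from $p$ to $\pi(p)$, starting from the $p$-end. A charging argument, combining the fact that any Steiner $s \in T^*$ lies on the $T^*$-paths of at most $\deg_{T^*}(s) - 1$ edges of $T''$ (one per child subtree of $s$ containing a terminal descendant) with the slack afforded by the floor whenever $\ell_p$ is odd, is intended to show that each Steiner of $T^*$ is accounted for in at most one such pair across all edges of $T''$. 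This yields $C(T'') \le k^*$.

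Finally, since $T$ is an MST of $G_P$ under $w$, the matroid exchange property supplies a sequence of swaps $T'' \mapsto T'' - e' + e$ with $e \in T \setminus T''$, $e' \in T'' \setminus T$, and $w(e) \le w(e')$, that converts $T''$ into $T$. Because $x \mapsto \lfloor x/2 \rfloor$ is monotone non-decreasing on nonnegative integers, each such swap also satisfies $\lfloor w(e)/2 \rfloor \le \lfloor w(e')/2 \rfloor$, so $C$ is non-increasing along the sequence and hence $C(T) \le C(T'')$. Combining gives $C(T) \le C(T'') \le k^* \le k$, as required. The main obstacle is the charging argument for $C(T'') \le k^*$: the naive degree-sum bound $\sum_p \ell_p \le \sum_s (\deg_{T^*}(s) - 1) \le |P| + k^* - 2$ is by itself too weak, so the proof must genuinely exploit the odd-length discount from the floor to push the bound down to $k^*$.
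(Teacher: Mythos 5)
Your final step (transferring the bound from $T''$ to $T$ via the exchange property and monotonicity of $x\mapsto\lfloor x/2\rfloor$) is correct, and is in fact a more careful justification of the MST comparison than the paper itself gives. The gap is entirely in the ``technical heart,'' and it is not merely an unfinished computation: the inequality $C(T'')\le k^*$ is \emph{false} for the spanning tree $T''$ you construct. Take $T^*$ to be the tree with edges $(r,s_1)$, $(s_1,s_2)$, $(s_2,p_1)$, $(s_2,p_2)$, $(s_2,p_3)$, where $r,p_1,p_2,p_3$ are terminals and $s_1,s_2$ are Steiner points, so $k^*=2$. Rooting at $r$, every $p_j$ has $\pi(p_j)=r$ and $\ell_{p_j}=2$, hence $\sum_{p\ne r}\lfloor\ell_p/2\rfloor=3>2=k^*$; and if $G_i$ contains no shortcut edges then $w(p_j,r)=2$ exactly, so $C(T'')=3$. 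The same example refutes your auxiliary claim that a Steiner point $s$ lies on at most $\deg_{T^*}(s)-1$ of the paths: $s_1$ has degree $2$ but lies on all three paths, because its single child subtree contains three terminals all of whose nearest terminal ancestor is $r$. No charging argument or odd-length discount can rescue the bound for this $T''$; the star-shaped tree is simply the wrong spanning tree. (The lemma still holds here only because the MST of $G_P$ uses the cheap edges $(p_1,p_2)$ and $(p_2,p_3)$ of weight $1$, which your construction never considers.)

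The paper avoids this sharing by choosing the spanning tree greedily rather than by ancestry. It decomposes $T^*$ into full components rooted at Steiner points and repeatedly takes the currently \emph{deepest} terminal $p$, joins it to its \emph{nearest terminal} $q$ (not its nearest terminal ancestor), and then deletes the Steiner points strictly between $p$ and the least common ancestor of $p$ and $q$. Because $p$ is deepest, that deleted segment contains at least $\lfloor w(p,q)/2\rfloor$ of the Steiner points on the $p$--$q$ path, and since the deleted points are removed from the tree they can never be charged again by a later edge. That deletion mechanism --- a choice of connecting edges under which each edge's charge is paid by Steiner points provably private to it --- is exactly what your proposal is missing, and your $T''$ would have to be replaced by such a greedily built tree for the argument to go through.
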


\begin{proof}[\bf{\textit{Proof:}}]
Let $T^*$ be a $k$-ST of $P$ in $G_i$. A Steiner tree is full if all terminals are leaves. We decompose $T^*$ into a union of full trees. For each full tree $T_j^*$ of $T^*$, we will construct a spanning tree $T'_j$ of the terminals of $T_j^*$ in $G_P$, such that the union of these tree is a spanning tree $T'$ of $P$ in $G_P$ with $C(T')\le k$. We arbitrary select a Steiner point as the root of $T_j^*$; see Figure~\ref{fig:fig2}(a). The construction of $T'_j$ is bottom-up by an iterative process. In each iteration, we select the deepest leaf $p$ in the rooted tree, which is a terminal, and we connect it to its nearest terminal $q$ by an edge of weight equal to the number of Steiner points between them. Let $s$ be the first common parent of $p$ and $q$. We then remove the Steiner points between $p$ and $s$ (in the last iteration, we may remove all of the remaining points). 

\begin{figure}[htp]
    \centering
        \includegraphics[width=.87\textwidth, height=0.3\textwidth]{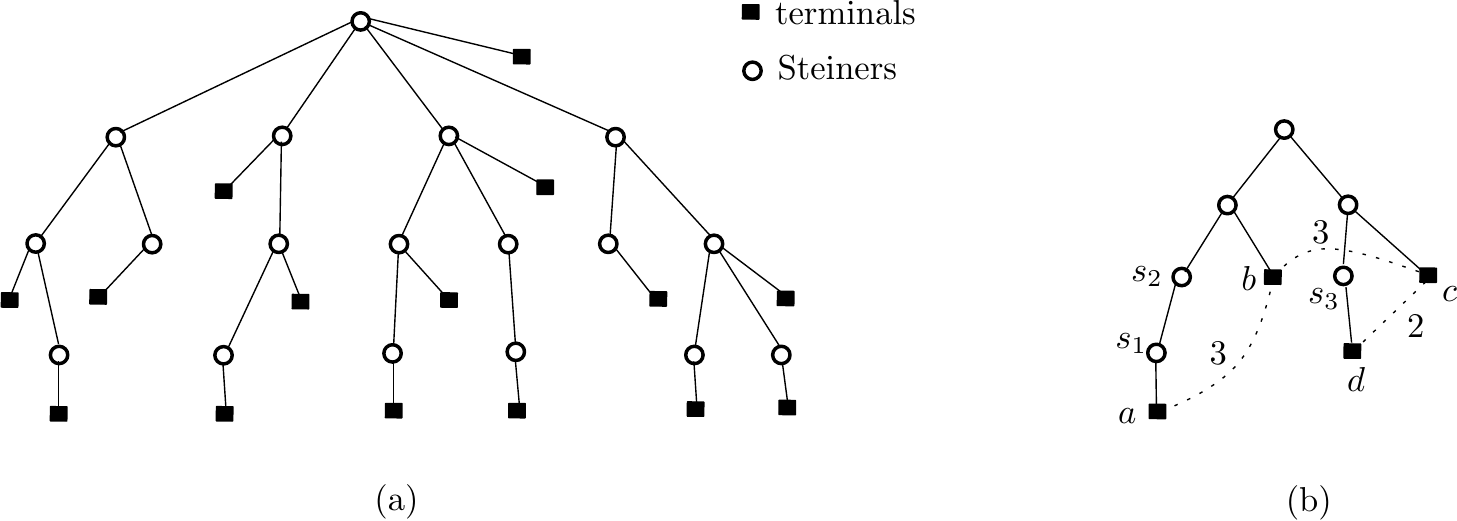}
    \caption{(a) The rooted tree, and (b) the construction of $T'_j$.}
    \label{fig:fig2}
\end{figure}

In the example in Figure~\ref{fig:fig2}(b), we first select the terminal $a$, which is the deepest one, we connect it to the terminal $b$ by an edge of weight 3 and we remove the points $s_1$ and $s_2$. Next, we select the terminal $d$, we connect it to the terminal $c$ by an edge of weight 2 and we remove the point $s_3$. In the last iteration, we select the terminal $b$, we connect it to the terminal $c$ by an edge of weight 3 and we remove all of the remaining points.

Notice that, since, in each iteration, we select the deepest terminal, we add an edge $(p,q)$, of weight $w(p,q)$, and we remove at least $\left\lfloor w(p,q)/2 \right\rfloor$ Steiner points from $T_j^*$. This implies that $C(T'_j)=\sum_{e\in T'_j}\left\lfloor w(e)/2 \right\rfloor \le k_j$, where $k_j$ is the number of Steiner points in $T_j^*$. Moreover, the union $T'$ of the trees $T'_j$ is a spanning tree of $G'$ and has $C(T') \le k$. Thus, since $T$ is a minimum spanning tree of $G'$, we have $C(T) \le C(T') \le k$.
\end {proof}

We now describe our approximation algorithm. We traverse the edges of $E$ in the sorted order and, for each edge $e_i\in E$, we construct a minimum spanning tree $T$ of $G_P=(P,E_P)$ and check whether $C(T) \le k$. If so, we construct a $k$-ST of $P$, otherwise, we move to the next edge $e_{i+1}$.

\floatname{algorithm}{Algorithm}

\begin{algorithm}[htp]
\caption{$EBST$($G=(V,E),P,k$)}\label{proc:proc1}
\begin{algorithmic}[1]

\STATE $C(T)\leftarrow\infty$
\STATE $G_P=(P,E_P)\leftarrow$ the complete graph over $P$
\STATE $i\leftarrow 0$
\WHILE {$C(T)>k$}
\STATE $i\leftarrow i+1$
\STATE construct the graph $G_i$
\FOR {each edge $(p,q)\in E_P$}
		\STATE $w(p,q)\leftarrow$ the number of Steiner points in $\delta_i(p,q)$
\ENDFOR
\STATE construct a minimum spanning tree $T$ of $G_P$ under $w$
\STATE $C(T)\leftarrow\sum_{e\in T}\left\lfloor w(e)/2 \right\rfloor$
\ENDWHILE
\STATE $Construct$-$k$-$ST$($T,G_i$) 

\end{algorithmic}
\end{algorithm}  

The construction of a $k$-ST is done as follows. For each edge $e=(p,q)\in T$, we select $\left\lfloor w(e)/2 \right\rfloor$ Steiner points on any shortest Steiner path between $p$ and $q$ in $G_i$, such that, the path from $p$ to $q$ that passes through these points has a bottleneck at most $2|e_i|$, and we connect these points to form a path; see Figure~\ref{fig:fig3}. Clearly, the obtained Steiner tree contains at most $k$ Steiner points and its bottleneck is at most $2|e_i|$.

\begin{figure}[htp]
    \centering
        \includegraphics[width=.65\textwidth, height=0.3\textwidth]{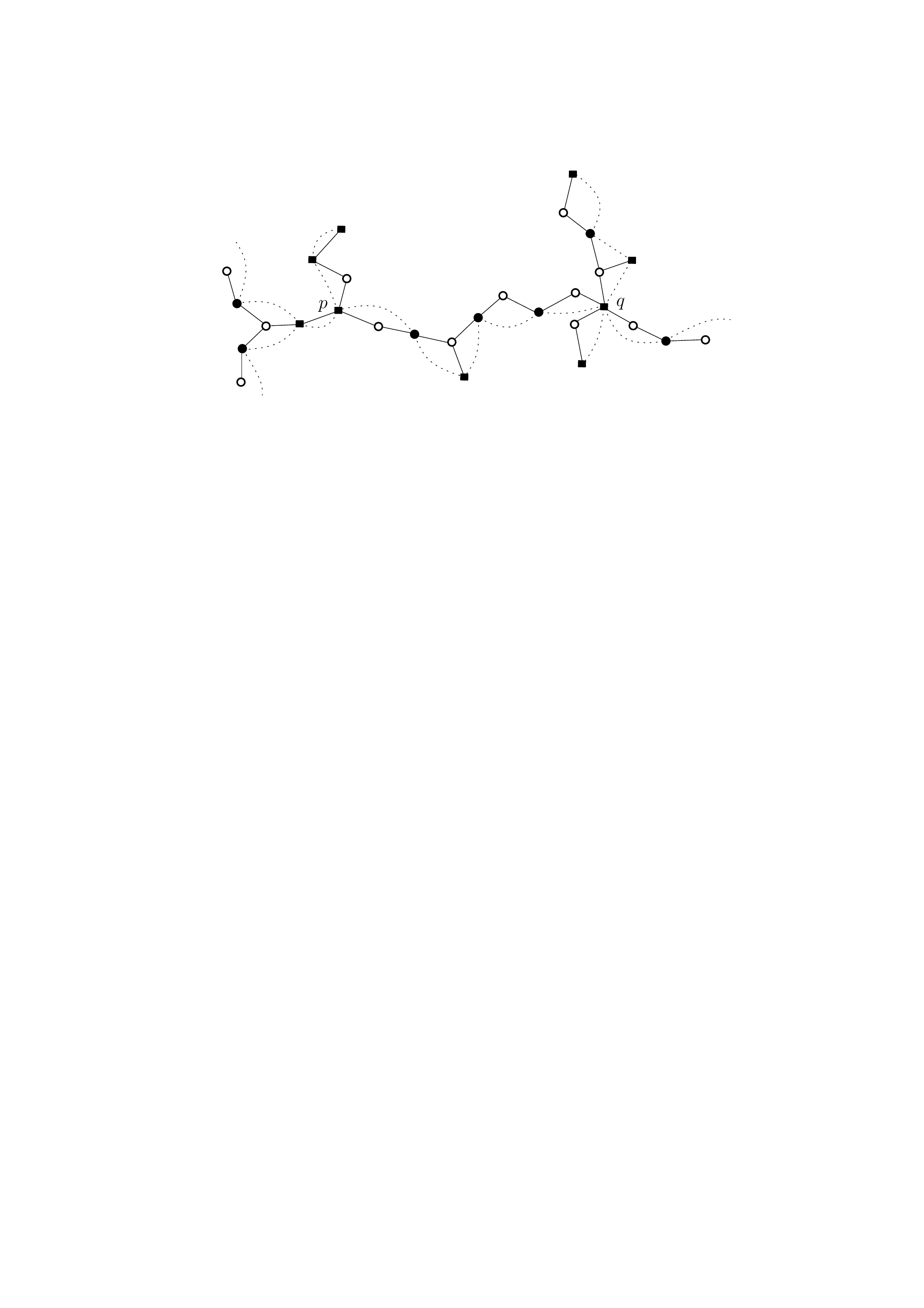}
    \caption{The constructed $k$-ST consists of the black circles and the dotted lines.}
    \label{fig:fig3}
\end{figure}

\begin{lemma} \label{lemma:lemma3.4}
The algorithm above constructs a $k$-ST of $P$ with bottleneck at most 2 times the bottleneck of an optimal $k$-ST.
\end{lemma}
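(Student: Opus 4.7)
The plan is to show two things: that the algorithm terminates by some index $i \le i^*$, where $e_{i^*}$ is the bottleneck edge of an optimal $k$-ST, and that the tree built in the final step has all edges of length at most $2|e_i|$, yielding a bottleneck at most $2|e_{i^*}|$.

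First I would fix an optimal $k$-ST $T^*$ with bottleneck $|e_{i^*}|$ and observe that $T^*$ lives inside the graph $G_{i^*}$. By Lemma~\ref{lemma:lemma3.3}, applied to $G_{i^*}$, the minimum spanning tree $T$ of $G_P$ under the weight function $w$ defined by shortest Steiner paths in $G_{i^*}$ satisfies $C(T) \le k$. Hence the \textbf{while} loop of Algorithm~\ref{proc:proc1} exits at some iteration $i \le i^*$, and in particular $|e_i| \le |e_{i^*}|$.

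Next I would analyze the construction in the final step. For each edge $(p,q) \in T$, let $p = v_0, s_1, s_2, \ldots, s_w, v_{w+1} = q$ be a shortest Steiner path $\delta_i(p,q)$ in $G_i$ (so $w = w(p,q)$). Every edge $(v_j, v_{j+1})$ on this path has length at most $|e_i|$ by definition of $G_i$. I would then select every other internal point, namely $s_2, s_4, s_6, \ldots$, giving exactly $\lfloor w/2 \rfloor$ Steiner points, and replace the original path by the path through $p$, these selected points, and $q$. By the triangle inequality each new edge, spanning at most two original edges, has length at most $2|e_i|$. Summing over all edges of $T$, the total number of Steiner points used is $\sum_{e \in T} \lfloor w(e)/2 \rfloor = C(T) \le k$, so the output is a valid $k$-ST with bottleneck at most $2|e_i| \le 2|e_{i^*}|$.

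The only subtle point, and what I would treat as the main thing to check, is that the ``every other point'' selection simultaneously (a) uses exactly $\lfloor w/2 \rfloor$ Steiner points (in particular respecting the budget $C(T) \le k$), and (b) leaves no gap longer than two original edges even when $w$ is odd; a short case split on the parity of $w$ handles both, since when $w$ is odd the final new edge spans the two original edges $(s_{w-1}, s_w)$ and $(s_w, q)$, still of combined length at most $2|e_i|$. Finally I would note that spanning-tree connectivity carries over: connecting $p$ to $q$ through the chosen intermediate points preserves the path structure, so the union over $e \in T$ is a connected Steiner tree of $P$, completing the proof.
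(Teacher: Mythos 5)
Your proof is correct and follows essentially the same route as the paper: invoke Lemma~\ref{lemma:lemma3.3} to show the loop terminates no later than the index of the optimal bottleneck edge (the paper states this in contrapositive form for the first accepting index $e_i$), then bound the constructed tree's bottleneck by $2|e_i|$. Your explicit ``every other Steiner point'' selection with the parity check is a welcome elaboration of a detail the paper merely asserts in the paragraph describing $Construct$-$k$-$ST$.
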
 

\begin{proof}[\bf{\textit{Proof:}}]
Let $e_i$ be the first edge satisfying the condition $C(T) \le k$. Thus, by Lemma~\ref{lemma:lemma3.3}, the bottleneck of any $k$-ST in $G$ is at least $|e_i|$, and, therefore, the constructed $k$-ST has a bottleneck at most 2 times the bottleneck of an optimal $k$-ST.
\end{proof}

\begin{lemma} \label{lemma:lemma3.5}
The algorithm above has a polynomial running time.
\end{lemma}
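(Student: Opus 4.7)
The plan is to bound both the number of iterations of the main while-loop and the cost of the work done inside each iteration, and then add the cost of the final call to $Construct$-$k$-$ST$. The number of iterations is at most $|E| = \binom{n+m}{2} = O((n+m)^2)$, since we advance through the sorted edges $e_1,e_2,\ldots,e_l$ one at a time. So it suffices to show that each iteration, as well as the final construction step, can be carried out in polynomial time.

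Inside a single iteration (for a fixed $e_i$), the graph $G_i$ can be built in $O((n+m)^2)$ time by scanning $E$. The main subroutine is computing the weights $w(p,q)$, which amounts to finding, for every pair of terminals, a shortest Steiner path in $G_i$, measured by the number of internal Steiner vertices. I would reduce this to a standard shortest-path computation by assigning edge weights in $G_i$ as follows: give an edge $(u,v)$ weight $0$ if both endpoints are terminals, weight $\tfrac{1}{2}$ if exactly one endpoint is a Steiner point, and weight $1$ if both endpoints are Steiner points; then the weight of any $p$--$q$ path equals exactly the number of Steiner points it contains. Running Dijkstra (or simply BFS on a suitable vertex-split graph) from each of the $n$ terminals then yields all the values $w(p,q)$ in $O(n(n+m)^2)$ time. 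Given these weights, a minimum spanning tree $T$ of $G_P$ takes $O(n^2)$ time, and computing $C(T)=\sum_{e\in T}\lfloor w(e)/2 \rfloor$ takes $O(n)$ time. So a single iteration costs $O(n(n+m)^2)$, and all iterations together cost $O(n(n+m)^4)$.

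Finally, the post-processing call $Construct$-$k$-$ST$($T,G_i$) walks over the $n-1$ edges of $T$, and for each edge $(p,q)$ retrieves a shortest Steiner path in $G_i$ (which we may store during the shortest-path computation above) and keeps $\lfloor w(p,q)/2 \rfloor$ of its Steiner points, reconnecting them to form a path between $p$ and $q$. This is clearly polynomial, e.g.\ $O(n(n+m))$. Adding everything up gives an overall running time polynomial in $n+m$, as required.

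I do not anticipate a serious obstacle here; the only point that needs a word of justification is that the per-pair quantity $w(p,q)$, which is defined as the minimum number of Steiner points along a $p$--$q$ path in $G_i$, is genuinely a shortest-path quantity and therefore computable in polynomial time by standard means. Everything else (enumerating edges, building $G_i$, computing an MST, extracting a path) is routine.
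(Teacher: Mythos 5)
Your proof is correct and follows essentially the same outline as the paper's: bound the number of while-loop iterations by $|E|=O((n+m)^2)$, show each iteration (building $G_i$, computing all pairwise Steiner-path weights $w(p,q)$, building the MST, evaluating $C(T)$) is polynomial, and note that the final construction step is polynomial as well. The only difference is that where the paper simply cites a standard $O((n+m)^3)$ all-pairs shortest-path routine for the weights $w(p,q)$, you give an explicit (and correct) reduction via half-integer edge weights; this is a harmless, slightly more detailed substitute.
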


\begin{proof}[\bf{\textit{Proof:}}]
$G_i$ can be constructed in $\O((n+m)^2)$ time. In order to construct the graph $G_P$, we can compute in $\O((n+m)^3)$ time the shortest Steiner paths between each pair of points in $P$~\cite{Cormen01}. Once $G_P$ is constructed, computing a minimum spanning tree of $G_P$ can be done in $\O(n^2)$ time, and selecting the relevant Steiner points can be done in $\O(k(n+m))$ time.
\end{proof}

By combining Lemma~\ref{lemma:lemma3.4} and Lemma~\ref{lemma:lemma3.5}, we get the following theorem.
\begin{theorem}
There exists a polynomial-time approximation algorithm with performance ratio 2 for the $k$-BST problem.
\end{theorem}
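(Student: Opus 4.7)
The plan is to derive this theorem as an immediate consequence of Lemmas~\ref{lemma:lemma3.4} and~\ref{lemma:lemma3.5}, which together already cover both claims: the first supplies the $2$-approximation guarantee for Algorithm~$EBST$, and the second bounds the cost of a single iteration of its main loop. So the only work left is to check that the outer loop is executed only polynomially many times and to package the pieces into one statement.

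Concretely, I would first invoke Lemma~\ref{lemma:lemma3.4} to conclude that, upon termination, Algorithm~$EBST$ returns a Steiner tree of $P$ using at most $k$ Steiner points whose bottleneck is at most twice that of an optimal $k$-ST; this is exactly the approximation guarantee stated in the theorem. For the running time, I would bound the number of iterations of the \textbf{while}-loop by $|E|=\binom{n+m}{2}=\O((n+m)^2)$, since each iteration advances $i$ by one along the sorted edge list. By Lemma~\ref{lemma:lemma3.5}, one iteration is dominated by the $\O((n+m)^3)$ all-pairs shortest-Steiner-path computation on $G_i$, and the final call to $Construct$-$k$-$ST$ adds only $\O(k(n+m))$. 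Multiplying gives an overall running time polynomial in $n+m$.

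The main thing to verify, rather than a real obstacle, is that the loop actually terminates: by Lemma~\ref{lemma:lemma3.3}, as soon as $i$ reaches a value for which $G_i$ contains some $k$-ST of $P$, the condition $C(T)\le k$ is satisfied and the loop exits; such an $i$ exists whenever the instance is feasible, since taking $i=l$ recovers $G_i=G$, and infeasibility can be detected up front in polynomial time by checking whether $G$ itself contains a spanning tree of $P$ using at most $k$ points of $S$. Combining the correctness from Lemma~\ref{lemma:lemma3.4} with the polynomial-time bound just derived yields the theorem.
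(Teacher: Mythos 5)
Your proposal is correct and follows essentially the same route as the paper, which simply combines Lemma~\ref{lemma:lemma3.4} and Lemma~\ref{lemma:lemma3.5}. Your additional bookkeeping---bounding the number of while-loop iterations by $|E|=\O((n+m)^2)$ and noting that termination/feasibility is guaranteed by Lemma~\ref{lemma:lemma3.3} once $G_i=G$---is a welcome explicit filling-in of details the paper leaves implicit, but it does not change the argument.
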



\section{Conclusion}\label{sec:Sec4}
In this paper, we studied the problem of finding bottleneck Steiner trees in the Euclidean plane. We  proved that the $k$-BST problem in the plane does not admit any approximation algorithm with performance ratio less than $\sqrt{2}$, unless $P=NP$, and that there exists a polynomial-time approximation algorithm with performance ratio 2. It would be interesting to find better approximation algorithm for the $k$-BST problem. Another interesting question is how efficient can one solve the $k$-BST problem for a constant $k>0$?



\bibliographystyle{plain}
\bibliography{ref}

\end{document}